\documentclass[lettersize,journal]{IEEEtran}
\IEEEoverridecommandlockouts

\usepackage{cite}
\usepackage{amsmath,amssymb,amsfonts, bm}
\usepackage{algorithmic}
\usepackage[ruled,lined,linesnumbered]{algorithm2e}
\usepackage{graphicx}
\usepackage{tikz}
\usepackage{textcomp}
\usepackage{indentfirst}
\usepackage{color, colortbl}
\usepackage[english]{babel}
\usepackage{wrapfig}
\usepackage{enumitem}
\usepackage{subcaption}
\usepackage{xfrac}
\usepackage{diagbox}
\usepackage{array}
\usepackage{multirow}
\usepackage{arydshln}
\usepackage{tcolorbox}
\usepackage{mathtools}
\usepackage{hyperref}
\newtheorem{theorem}{Theorem}
\newtheorem{lemma}{Lemma}

\newtheorem{proof}{Proof}

\usepackage[T1]{fontenc}

\usepackage{marvosym}
\usetikzlibrary{arrows.meta, matrix, positioning}
\usetikzlibrary{calc,shapes,shapes.geometric,patterns}
\tikzset{
    queue element/.style={
        draw,very thin,rounded corners,
        fill=yellow!30,
        font=\sffamily\footnotesize
    },
    >={[scale=0.8]Triangle}
}

\tikzset{
    node/.style={circle, draw, fill=lightgray!30, minimum size=5mm, font=\small},
    big/.style={circle, draw, fill=lightgray!30, minimum size=7mm, font=\small},
    edge/.style={->, >=Stealth, thin},
}

\def\BibTeX{{\rm B\kern-.05em{\sc i\kern-.025em b}\kern-.08em
    T\kern-.1667em\lower.7ex\hbox{E}\kern-.125emX}}
\begin{document}

\title{Congestion Reduction in EV Charger Placement Using Traffic Equilibrium Models
\thanks{
$\dagger$: Contributed equally. 
$\ddagger$: Corresponding author.
Yasin Sonmez, Can Kizilkale and Murat Arcak are with the Department of Electrical Engineering and Computer Sciences at the University of California, Berkeley. Alex Kurzhanskiy is with the California Partners for Advanced Transportation Technology
(PATH) at the University of California, Berkeley. Semih Kara is with Gensyn AI (work conducted while at the University of Maryland, College Park). Nuno C. Martins is with the Department of Electrical and Computer Engineering and the Institute for Systems Research at the University of Maryland, College Park.
This project was sponsored in part by the NSF CPS grant NSF CNS-2135791 and by the California Statewide Research Transportation Program. 
}
}

\author{\IEEEauthorblockN{ Semih Kara$^{1,\dagger}$ \hspace{2cm}
Yasin Sonmez$^{2,\dagger}$ \hspace{2cm}
Can Kizilkale$^{2,\ddagger}$ \\ \vspace{2em}
Alex Kurzhanskiy$^{2}$ \hspace{2cm} Nuno C. Martins$^{3}$ \hspace{2cm} Murat Arcak$^{2}$}\\ \vspace{2em}
\IEEEauthorblockA{
$^1$Gensyn AI}\\
skara@terpmail.umd.edu\\
\IEEEauthorblockA{
$^2$University of California, Berkeley}\\
 \{yasin\_sonmez, cankizilkale, akurzhan, arcak\}@berkeley.edu\\
\IEEEauthorblockA{
$^3$University of Maryland, College Park}\\
nmartins@umd.edu}

\author{Semih Kara$^{\dagger}, \quad$ 
Yasin Sonmez$^{\dagger}, \quad$ 
Can Kizilkale$^{\ddagger}$,\\
Alex Kurzhanskiy, \quad
Nuno C. Martins, \quad
Murat Arcak

}

\maketitle

\begin{abstract}
Growing EV adoption can worsen traffic conditions if chargers are sited without regard to their impact on congestion. We study how to strategically place EV chargers to reduce congestion using two equilibrium models: one based on congestion games and one based on an atomic queueing simulation. We apply both models within a scalable greedy station‐placement algorithm. Experiments show that this greedy scheme yields optimal or near‐optimal congestion outcomes in realistic networks, even though global optimality is not guaranteed as we show with a counterexample. We also show that the queueing-based approach yields more realistic results than the congestion‐game model, and we present a unified methodology that calibrates congestion delays from queue simulation and solves equilibrium in link‐space.\\
\begin{IEEEkeywords}
Traffic Models, Charging infrastructure, Optimization
\end{IEEEkeywords}

\end{abstract}

\section{Introduction}

\IEEEPARstart{E}{lectric} vehicle (EV) growth is proceeding at an unprecedented rate \cite{iea}, with global targets aiming for 100 million EVs by 2030 \cite{unitednations}. 
This requires a momentous expansion of charging infrastructure; California alone projects a need for nearly 1.2 million public chargers for passenger EVs and 157,000 for commercial vehicles by 2030 \cite{CAC}.
However, such a dramatic enlargement faces significant challenges \cite{10092390}. Major investments are needed for the power distribution system, as current grid capacity severely limits charger placement~\cite{liander_net, pge_grid,ahmed,zhu}. Furthermore, zoning approvals and permits are time-consuming. For the foreseeable future, the number of chargers is unlikely to match the increase in EVs, potentially leading to queues, traffic disruption, and significant travel time from detours to limited charger locations.

Motivated by this emerging supply-demand imbalance, in this paper, we focus on the problem of EV charger location placement to reduce traffic congestion, both by mitigating congestion caused directly by EVs and by strategically using charging locations and charging needs to actively shape traffic flows at equilibrium. Prior work on charger network design has addressed related challenges from multiple perspectives. Early foundational work integrated travel demand and accessibility into siting decisions \cite{Huang2016TRD}, while data-driven approaches leveraging mobility traces offer complementary insights \cite{Vazifeh2019TRA}. Of particular relevance to our study, equilibrium-aware charger placement has received growing attention, including bilevel network design models with elastic demand and station congestion under user equilibrium \cite{Huang2020TRD}, MPEC formulations incorporating driving range constraints \cite{He2015TRC,He2018TRC}, and extensions to pricing mechanisms \cite{Mirheli2023TITS}. Congestion game frameworks have been applied to both optimal placement \cite{Sonmez2024Optimal} and joint optimization of placement and pricing strategies \cite{Gasnier2025PricingPlacement}. Additional work has addressed extreme fast-charging stations \cite{Fang2025TITS}, stochastic user equilibrium for mixed EV/gasoline traffic \cite{Zeng2025TRE}, and integrated models that jointly consider traffic assignment and power grid requirements \cite{Mao2021TITS}.

We follow two approaches to compute equilibrium flows. The first uses a congestion game framework to enable theoretical analysis. A congestion game \cite{rosenthal1973class} is a game-theoretic model where multiple users choose among shared resources, and the cost (delay) of each resource increases with the number of users choosing it (i.e., its congestion). This is a useful abstraction; however, for traffic, these models are usually too idealized. They omit important physical and operational constraints, such as spillback effects and vehicle dynamics at intersections, and they ignore practical network details (e.g., number of lanes, speed limits, geometric features).
The second approach is based on a mesoscopic queueing-based traffic simulator, which is more realistic and better suited for accurate simulation studies. However, mesoscopic models are more difficult to analyze theoretically due to the time-dependent/dynamic nature of the problem. Examples of related work in this space include dynamic equilibrium (Nash flow over time) in fluid queueing networks and constructive methods for computing such equilibria \cite{Koch2011}, as well as continuous-time analogues of key static network flow results \cite{KOCH2014580}. Reference \cite{Cominetti2015Dynamic} models congestion in fluid queueing networks and establishes the existence, uniqueness, and global stability of Wardrop dynamic equilibrium. In addition, under a necessary capacity condition, dynamic equilibria in fluid queueing networks are shown to reach a predictable steady state in finite time \cite{Cominetti2022LongTerm}. 

In this paper, we study how to reduce congestion by strategically placing EV chargers. We develop two equilibrium models, one based on congestion games and the other on a more realistic atomic queueing simulation.  We apply both within a highly scalable greedy station placement algorithm (Algorithm \ref{alg:greedyEV}). For the greedy algorithm, we adopt a similar methodology to \cite{Sonmez2024Optimal}, which proposes a computationally tractable two–layer optimization framework: the inner layer computes equilibrium flows for a candidate set of charger locations while the outer layer selects the charger locations.

Our experiments show that, under both equilibrium formulations, greedy placement produces optimal or near–optimal congestion outcomes relative to the model being used. We also show with an example that global optimality is not guaranteed. We then compare the congestion-game abstraction with the queue-based model and demonstrate that the queue-based approach yields significantly different outcomes.

Our contributions are: (i) explicit modeling of traffic congestion and equilibrium routing behavior through both congestion game theory and queue-based simulation; (ii) a tractable link-space congestion game formulation that avoids a decision route set while implicitly considering all feasible paths through flow conservation constraints; (iii) a unified methodology that calibrates congestion game delay functions from queue simulations, solves equilibrium in link-space, and recovers interpretable route flows; and (iv) empirical validation showing that greedy placement achieves optimality in realistic networks.
The queue-based model in item (i), as well as items (ii), (iii), and (iv) are entirely new and were not addressed in the conference paper \cite{Sonmez2024Optimal}.

The paper is organized as follows: We first introduce both the congestion game and the queue–based equilibrium models. We then introduce the Optimal EV Placement problem and describe the greedy station placement algorithm and methodology. We conclude with experimental results.

\section{Congestion Game model} \label{sec:ev-as-congestion-game}
\noindent
In the EV placement model using a congestion game framework, we evaluate the total equilibrium delay for each candidate location and select the one that minimizes it. For the traffic problems, the road network is the resource, and a driver's cost is travel time minus the charging benefit. With non-atomic flow, link delays depend on total usage. Chargers are modeled as additional network links with their own delays.

In the congestion game model, the road network is modeled as a directed graph $G=(V,E)$ where $V$ represents nodes and $E$ represents links representing physical road segments. The travel time on a link $l \in E$ is given by $d_l(x_l)$, where $x_l \in \mathbb{R}^+$ denotes the flow on $l$ (since the total flow is bounded, the link flows are also bounded). The delay function $d_l$ is assumed to be non-decreasing with respect to $x_l$.
Given a path $s$ (a sequence of distinct vertices with each consecutive pair connected by an edge), the total delay is the sum of the delays on its links: \[
\sum_{l \in s} d_l(x_l).
\]
Let $V_c \subseteq V$ be the set of candidate nodes where an EV station can be placed. A charger at node $v \in V_c$ is represented as an additional self-loop $(v,v)$, indicating that if a route includes $(v,v)$, the EV stops at $v$ to charge. Each path may include at most one such charging link. The time spent at a charging station is modeled by a non-decreasing function $d_l(x_{(v,v)})$ and added to the total delay of any path that includes it. The set of all self-loop charging links is denoted by $E_{charger}$.

\textit{Agent types.} The population is partitioned into types $(i,t)$. $i \in \{1, \ldots , N\}$ denotes origin destination pairs $O_i,D_i$. We will denote the feasible set of routes for agent type $(i,t)$ between $O_i$ and $D_i$ as $S_{(i,t)}$, and $t \in \{F_1, F_2, F_3\}$ specifies charging requirements (let $T=\{1, \ldots, N\} \times \{F_1, F_2, F_3\}$ represent the type set):
\begin{itemize}
    \item $F_1$: Agents that do not require charging (e.g., gasoline vehicles or EVs with sufficient charge). Their feasible paths exclude self-links $(v,v)$.
    \item $F_2$: Agents that must charge once en route. Their feasible paths include exactly one self-link $(v,v)$ corresponding to a charging station.
    \item $F_3$: Agents that may benefit from charging but can also complete the trip without it. The benefit depends on waiting time, cost, and detour; those with no benefit are also included in this group.
\end{itemize}
\textit{Agent Cost Functions.} While agent types in $(i, F_1)$ and $(i, F_2)$ are similar in the sense that they try to pick the route with minimum delay, the agents in the third set decide whether the additional benefit from charging outweighs a longer route delay. We represent the cost for agent type $(i,t)$ when taking route $s_{(i,t)} \in S_{(i,t)}$ as $u^{(i,t)}(s_i)$. For agents of type $(i,t)$ where $t \in \{F_1,F_2\}$ the cost of a route $s_{(i,t)}$ is $$u^{(i,t)}(s_{(i,t)})=\sum_{l\in s_{(i,t)}}d_l(x_l).$$ A path of an agent in type $(i, F_1)$ will not include a self-directed charging link, while, for type in $(i, F_2)$, the path $s_{(i,t)}$ will contain exactly one such link. For agents of type $(i, F_3)$, if the route $s_{(i,t)}$ does not include a charging link then the cost function will be the same, $\sum_{l\in s_{(i,t)}}d_l(x_l)$; otherwise, it will have an additional term $c_i$ representing the benefit of charging for agent sub-type $i$, hence the cost will be $\sum_{l\in s_{(i,F_3)}}d_l(x_l) -c_i$ and $$u^{(i,F_3)}(s_{(i,F_3)})=\sum_{l\in s_{(i,F_3)}}(d_l(x_l) - \mathbf{1}(l \in E_{\text{charger}})c_i)$$ where $\mathbf{1}(l \in E_{\text{charger}})$ represents an indicator function.

\subsection{Nash Equilibrium for the congestion game framework}
\noindent
The Nash equilibrium concept is central concept in game theory. At such an equilibrium, no agent (in our context, a driver) can obtain a better outcome by unilaterally changing their decision while others keep theirs fixed \cite{Nash-noncoop-games}. Nash Equilibrium in a continuous-flow congestion game with an infinite number of infinitesimal players is also known as the Wardrop Equilibrium. In a congestion game, this means that a strategy profile $s^*$ constitutes a Nash equilibrium (NE) if
\[
u^{(i,t)}(s^*_{(i,t)}) \leq u^{(i,t)}(s_{(i,t)})
\]
for each agent type $(i,t)$ and feasible paths $s_{(i,t)} \in S_{(i,t)}$. The Nash equilibrium framework has been widely applied in traffic analysis \cite{rosenthal-nash-eq,bell-casir-traffic-nash-eq,meunier-wagner-eq-res-for-dyn-congestion-games}.

Under certain conditions, agents can learn to play a Nash equilibrium by following simple heuristics that align with realistic human behavior. For instance, in a potential congestion game, if agents iteratively switch to routes with lower delays, the system gradually moves toward a Nash equilibrium \cite{monderer-shapley-potential-games, sandholm-potential-games}. Therefore, Nash equilibria are often used to characterize steady operating points in traffic systems, even when agents are not fully rational. Motivated by this reasoning, we also make extensive use of Nash equilibria in this paper.

To find the equilibrium points of our congestion game model, we  use the following potential function:
\begin{equation}
\label{eq:ExtendedPotential}
    \sum_{l \in E \cup E_{\text{charger}} \ } \int_{0}^{\sum_{i,t} x^{(i,t)}_l} d_l(x)\,dx - \sum_{l \in E_{\text{charger}}}  \sum_{i} c_i x^{(i,F_3)}_l  
\end{equation}

This is an extension of the well known Beckmann potential and the optima of the following formulation are the Nash equilibria.
We minimize (\ref{eq:ExtendedPotential}) with respect to flow constraints. For each agent type $(i,t)$, the total flow over all feasible paths must equal the demand $q^{(i,t)}$:
\begin{align}
\label{eq:EPOPT}
   \min_{\{x^{(i,t)}_{s_{(i,t)}}\}} & \sum_{l \in E \cup E_{\text{charger}} \ } \int_{0}^{\sum_{i,t} x^{(i,t)}_l} d_l(x)\,dx - \sum_{l \in E_{\text{charger}}}  \sum_{i} c_i x^{(i,F_3)}_l \notag \\
   \text{s.t.} \quad & \sum_{s_{(i,t)} \in S_{(i,t)}} x^{(i,t)}_{s_{(i,t)}} = q^{(i,t)}, \quad \forall (i,t) \in T     
\end{align}
Here, with an abuse of notation, we represent the flow for agent of type $(i,t)$ on route $s_{(i,t)} \in S_{(i,t)}$ as $x^{(i,t)}_{s_{(i,t)}}$.

The following theorem uses the convexity of this extended potential to ascertain the correspondence between the minimizers and Nash equilibria:

\begin{theorem}
\label{prop:mainprop}
    The extended potential function (\ref{eq:ExtendedPotential}) is convex, and every minimizer $s^*$ of (\ref{eq:EPOPT}) is a Nash Equilibrium.
\end{theorem}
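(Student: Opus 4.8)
The plan is to prove the two claims — convexity of the potential and the equivalence between minimizers and Nash equilibria — in that order. For convexity, I would argue term by term. The first sum $\sum_{l} \int_0^{\sum_{i,t} x^{(i,t)}_l} d_l(x)\,dx$ is a composition of a convex function with an affine map: since each $d_l$ is non-decreasing, the integral $\phi_l(y) = \int_0^y d_l(x)\,dx$ is convex in the scalar $y$, and $y = \sum_{i,t} x^{(i,t)}_l$ is a linear function of the route-flow variables (via the link–route incidence relation $x_l = \sum_{(i,t)}\sum_{s \ni l} x^{(i,t)}_{s}$), so each composed term is convex. The second term $-\sum_{l \in E_{\text{charger}}} \sum_i c_i x^{(i,F_3)}_l$ is linear in the flow variables, hence convex. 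A sum of convex functions is convex, establishing the first claim. I would also note the feasible set defined by the demand constraints (\ref{eq:EPOPT}) together with nonnegativity is a polyhedron, so the optimization problem is a convex program.

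For the second claim I would invoke the first-order (KKT / variational inequality) optimality conditions for the convex program (\ref{eq:EPOPT}). At a minimizer $s^*$, for each type $(i,t)$ there is a multiplier $\lambda^{(i,t)}$ for the demand constraint such that for every route $s_{(i,t)} \in S_{(i,t)}$, the partial derivative of the objective with respect to $x^{(i,t)}_{s_{(i,t)}}$ is $\geq \lambda^{(i,t)}$, with equality whenever $x^{(i,t)}_{s_{(i,t)}} > 0$. The key computation is that $\partial/\partial x^{(i,t)}_{s_{(i,t)}}$ of the first sum equals $\sum_{l \in s_{(i,t)}} d_l(x_l)$ (differentiating $\phi_l$ through the chain rule picks up $d_l(x_l)$ for each link on the route), and the derivative of the linear charging term contributes $-c_i$ exactly when the route contains a charging link and $t = F_3$. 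Hence the derivative equals precisely $u^{(i,t)}(s_{(i,t)})$ as defined in Section~\ref{sec:ev-as-congestion-game}. The optimality conditions therefore read: $u^{(i,t)}(s_{(i,t)}) \geq \lambda^{(i,t)}$ for all feasible routes, with equality for all routes carrying positive flow. Since the route flows sum to the positive demand $q^{(i,t)}$, at least one route is used, so $\lambda^{(i,t)}$ equals the minimum route cost, and every used route attains this minimum — which is exactly the Nash / Wardrop condition $u^{(i,t)}(s^*_{(i,t)}) \leq u^{(i,t)}(s_{(i,t)})$ for all $s_{(i,t)} \in S_{(i,t)}$.

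The main obstacle I anticipate is handling the non-smoothness: the delay functions $d_l$ are only assumed non-decreasing, not differentiable or even continuous, so $\phi_l$ is convex but only directionally differentiable, and classical KKT conditions must be replaced by their subgradient analogue (or by a variational-inequality formulation in terms of one-sided directional derivatives). I would address this by working with the convex-analytic optimality condition that $s^*$ minimizes (\ref{eq:EPOPT}) iff the directional derivative of the objective along every feasible direction is nonnegative, then choosing as test directions the transfers of an infinitesimal amount of flow from a used route to any other feasible route of the same type; evaluating this directional derivative reproduces the Wardrop inequality above. A secondary subtlety worth a remark is the abuse of notation between route flows $x^{(i,t)}_{s_{(i,t)}}$ and the induced link flows $x_l$; I would state the aggregation map explicitly so the chain-rule step is unambiguous. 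Finally, I would note that the theorem as stated asserts only that minimizers are Nash equilibria (the direction we need); the converse, together with existence, follows from the same convexity since the continuous objective attains its minimum on the compact feasible set, but I would mention this only if the paper uses it later.
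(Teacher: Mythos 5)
Your proposal is correct and follows essentially the same route as the paper: the paper also gets convexity from the monotonicity of $d_l$ plus linearity of the charging term, and its Nash-equilibrium step is exactly your flow-transfer first-order argument, phrased as a contradiction (shifting a small $\delta$ of flow from $s^*_{(i,t)}$ to a cheaper route would decrease the potential) rather than via KKT multipliers. Your explicit treatment of nondifferentiable $d_l$ through one-sided directional derivatives is somewhat more careful than the paper's informal perturbation, but it is the same underlying idea.
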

\begin{proof}
Given that the delay functions $d_l$ are non-decreasing, the summation over their  integrals, $\sum_{l \in E \cup E_{\text{charger}} \ } \int_{0}^{\sum_{i,t} x^{(i,t)}_l} d_l(x)\,dx$, is convex. Since $\sum_{l \in E_{\text{charger}}}  \sum_{i} c_i x^{(i,F_3)}_l$ is linear, (\ref{eq:ExtendedPotential}) is convex.
For some agent type $(i,t)$ let $s_{(i,t)} \in S_{(i,t)}$ improves over $s^*_{(i,t)}$ (path that optimizes the potential). If $t$ is $F_1$ or $F_2$ then $\sum_{l\in s_{(i,t)}} d_l(x_l) < \sum_{l\in s^*_{(i,t)}} d_l(x_l)$, if $t$ is $F_3$ then $\sum_{l\in s_{(i,t)}}(d_l(x_l) - \mathbf{1}(l \in E_{\text{charger}})c_i) < \sum_{l\in s^*_{(i,t)}}(d_l(x_l) - \mathbf{1}(l \in E_{\text{charger}})c_i)$ . But then for small enough $\delta$ increasing flows by $\delta$ in $s_{(i,t)}$ while decreasing by the same amount in $s^*_{(i,t)}$ we can decrease the potential function, which is a contradiction. Hence, every minimizer of \ref{eq:EPOPT} is a Nash Equilibrium.
\end{proof}

\begin{lemma}
    There exists an equilibrium for the congestion game.
\end{lemma}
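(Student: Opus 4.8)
The plan is to invoke the standard existence argument for minimizers of a continuous function over a compact set, and then appeal to Theorem~\ref{prop:mainprop} to conclude that such a minimizer is a Nash equilibrium. Concretely, I would argue as follows.

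First I would observe that the feasible set of the optimization problem (\ref{eq:EPOPT}) is nonempty and compact. Nonemptiness is immediate: for each agent type $(i,t)$, since $S_{(i,t)}$ is nonempty (every origin--destination pair is connected by at least one feasible path, which we assume), one can put all the demand $q^{(i,t)}$ on a single such path, producing a feasible route-flow vector. Compactness follows because each constraint $\sum_{s_{(i,t)} \in S_{(i,t)}} x^{(i,t)}_{s_{(i,t)}} = q^{(i,t)}$ together with $x^{(i,t)}_{s_{(i,t)}} \geq 0$ confines the route flows to a product of scaled simplices, which is a closed and bounded subset of a finite-dimensional Euclidean space (finite because $G$ is a finite graph and paths are sequences of distinct vertices, so each $S_{(i,t)}$ is finite). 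Consequently the induced link flows $x_l = \sum_{i,t} x^{(i,t)}_l$ are bounded as well.

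Next I would note that the objective in (\ref{eq:EPOPT}) is continuous on this set: each term $\int_{0}^{\sum_{i,t} x^{(i,t)}_l} d_l(x)\,dx$ is continuous in the link flows because $d_l$ is non-decreasing (hence locally bounded and Riemann-integrable on bounded intervals, making the integral a continuous, indeed convex, function of its upper limit), and the charging-benefit term $\sum_{l \in E_{\text{charger}}} \sum_i c_i x^{(i,F_3)}_l$ is linear. By the Weierstrass extreme value theorem, a continuous function on a nonempty compact set attains its minimum, so (\ref{eq:EPOPT}) has at least one minimizer $s^*$. By Theorem~\ref{prop:mainprop}, every minimizer of (\ref{eq:EPOPT}) is a Nash equilibrium of the congestion game; hence an equilibrium exists.

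I do not anticipate a serious obstacle here, since this is the classical Beckmann-type argument; the only points requiring a little care are (a) confirming the feasible set is genuinely finite-dimensional and compact, which hinges on the graph being finite and paths using distinct vertices so that each $S_{(i,t)}$ is finite, and (b) justifying continuity of the integral terms from only the monotonicity (not continuity) of $d_l$ --- a non-decreasing function on a bounded interval is bounded and has at most countably many discontinuities, so its integral function is Lipschitz on that interval and in particular continuous. If one wanted to sidestep even these mild technicalities, an alternative is to phrase the argument directly in link-space: the set of feasible link-flow vectors is a polytope (the image of the compact route-flow set under a linear map), the potential restricted to it is convex and continuous, and a convex continuous function on a compact convex set attains its minimum, which by Theorem~\ref{prop:mainprop} is an equilibrium.
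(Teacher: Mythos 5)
Your proposal is correct and follows essentially the same route as the paper: the paper's proof simply notes that boundedness of the flows guarantees a minimizer of (\ref{eq:EPOPT}) exists and then invokes Theorem~\ref{prop:mainprop} to conclude every minimizer is a Nash equilibrium. Your version just fills in the compactness/continuity/Weierstrass details that the paper leaves implicit, so no change of approach is involved.
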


\begin{proof}
Given the flows being bounded, \ref{eq:EPOPT} always has a minimizer. From Proposition \ref{prop:mainprop} all the minimizers are NE, which concludes the proof.     
\end{proof}

Since (\ref{eq:EPOPT}) can be solved as a convex optimization problem, this allows us to find NE points efficiently.

\subsection{Nash Equilibrium over link flows}
\noindent
Since the agent types induce linear constraints on the associated flows, the problem is formulated as a convex program. The formulation is expressed in terms of link flows, with distinct flow variables defined for each agent type and charger combination, thereby ensuring that the number of decision variables remains tractable.

\subsubsection*{Route-link formulation equivalence}
To establish the equivalence between the link-based and route-based formulations, we draw on the \emph{flow decomposition theorem} (see Ford and Fulkerson, \emph{Flows in Networks} \cite{FordFulkerson1962}). For any origin--destination pair \((o_i,d_i)\) and agent type \(t\), and in the absence of additional charging constraints, it is well known that any feasible edge flow in a single-commodity network can be decomposed into a collection of path flows. This decomposition result implies that a path-based formulation can always be rewritten as an edge-based formulation in which flows satisfy nodal conservation constraints and edge capacity bounds.  

When charging requirements are introduced, the same reasoning can be extended by refining the definition of agent types. Specifically, a type \((i,t)\) can be partitioned into subtypes according to the charging link that must be traversed. For each charging link \(c\), we introduce two subtypes: \((i,t,c_+)\), which represents the flow from the origin \(o_i\) to the head of link \(c\), and \((i,t,c_-)\), which represents the flow from the tail of \(c\) to the destination \(d_i\). Each of these subflows can in turn be expressed using nodal flow conservation and capacity constraints. To ensure that these subflows jointly represent the behavior of the original type \((i,t)\), we impose coupling constraints: the inflow \((i,t,c_+)\) and outflow \((i,t,c_-)\) associated with a charging link must be equal, and the sum of such flows across all charging links must match the total demand for type \((i,t)\).

As defined earlier, let $G = (V, E)$ denote the road network, where the cardinalities of $|V|$, $|E|$, and $|E_{\text{charger}}|$ are given by $n_v$, $n_e$, and $n_s$, respectively.

\subsubsection{Demand and Vehicle Types}
\noindent
There are $N$ origin-destination (OD) pairs, denoted as $\{(o_i, d_i)\}_{i=1}^N$. For each OD pair, we partition demand into three vehicle types:
\begin{equation*}
q^{(i,1)}: \text{non-charging,} \ \
q^{(i,2)}: \text{charging,} \ \
q^{(i,3)}: \text{may charge}
\end{equation*}
\noindent
Type 1 vehicles never charge en route, Type 2 always charge, and Type 3 may choose to charge depending on network conditions. For charging types, we further split the demand according to which charging link is used:
\[
q^{(i,t,c)}: \text{demand from OD $i$, type $t$, charging at station $c$}
\]

\subsubsection{Node Demand Vector}
\noindent
Let $s$ denote the node corresponding to charger $c$. We introduce a node demand vector for each OD pair $i$, type $t$, and route segment:
\begin{align}
y_v^{(i,t,c_+)} &=
\begin{cases}
    q^{(i,t,c)}, & v = o_i \\
    -q^{(i,t,c)}, & v = s \\
    0, & \text{otherwise}
\end{cases} \\
y_v^{(i,t,c_-)} &=
\begin{cases}
    q^{(i,t,c)}, & v = s \\
    -q^{(i,t,c)}, & v = d_i \\
    0, & \text{otherwise}
\end{cases} \\
y_v^{(i,t,nc)} &=
\begin{cases}
    q^{(i,t)} - \sum_c q^{(i,t,c)}, & v = o_i \\
    -q^{(i,t)} + \sum_c q^{(i,t,c)}, & v = d_i \\
    0, & \text{otherwise}
\end{cases}
\end{align}
These vectors encode how much demand enters and exits each node for each segment of the journey.

\subsubsection{Route--link mapping}
\noindent
Let $\mathcal{R}$ be a library of candidate routes (including charging and non-charging compositions) with cardinality $M$. Define the routing matrix $\mathbf{R} \in \{0,1\}^{n_e \times M}$ by
\[
\mathbf{R}_{\ell r} = \begin{cases}
1, & \text{if route } r \text{ uses link } \ell,\\
0, & \text{otherwise.}
\end{cases}
\]
For any nonnegative route-flow vector $\mathbf{f} \in \mathbb{R}^M_{\ge 0}$, the induced (non self-loop) link flows satisfy
\begin{equation}
\label{eq:xRf}
\mathbf{x} = \mathbf{R}\,\mathbf{f}.
\end{equation}

\subsubsection{Flow Variables and Conservation}
\noindent
To describe traffic movement, we define the following flow variables:
\begin{itemize}
    \item $x_l^{(i,t,c_+)}$: flow on link $l$ for vehicles of OD $i$, type $t$, traveling from $o_i$ to charger $c$ (before charging).
    \item $x_l^{(i,t,c_-)}$: flow on $l$ for vehicles from OD $i$, type $t$, traveling from charger $c$ to $d_i$ (after charging).
    \item $x_l^{(i,t,nc)}$: flow on $l$ for vehicles of OD $i$, type $t$, that never use a charging station.
\end{itemize}

Each $x_l^{(i,t,c)}$ is the sum of flows approaching and departing a charger:
\begin{align}
x_l^{(i,t,c)} = x_l^{(i,t,c_+)} + x_l^{(i,t,c_-)} \label{eq:splitflow}
\end{align}

The total flow on a link $l$ (excluding charger self-loops) sums over all OD pairs, vehicle types, and charger choices:
\begin{align}
x_l = \sum_{i=1}^N \sum_{t=1}^3 \left[ x_l^{(i,t,nc)} + \sum_{c \in E_{\text{charger}}} x_l^{(i,t,c)} \right]
\end{align}

For charger links $c$, the total flow passing through is:
\begin{align}
\hat{x}_c = \sum_{i=1}^N \sum_{t=1}^3 q^{(i,t,c)}
\end{align}

To impose flow conservation, we use the (reduced) incidence matrix $A \in \mathbb{R}^{n_v \times n_e}$, which encodes how links connect nodes:
\begin{equation}
A_{v,l} =
\begin{cases}
    1, & \text{if link } l \text{ starts at node } v \\
    -1, & \text{if link } l \text{ ends at node } v \\
    0, & \text{otherwise}
\end{cases}
\end{equation}
\noindent
Self-loop charger links are excluded from $A$ as they do not affect flow between distinct nodes.

\subsubsection{Optimization Problem Formulation}
\noindent
We denote the following link-space optimization as Problem~(CP). The goal is to allocate OD link flows and charging choices to minimize the Beckmann potential, accounting for both regular and charging links. The Beckmann potential function is defined as:
\begin{equation}
\Phi(\mathbf{x}, \hat{\mathbf{x}}) = \sum_{l \in E} \int_0^{x_l} d_l(\xi) d\xi + \sum_{c \in E_{\text{charger}}} \int_0^{\hat{x}_c} d_c(\xi) d\xi
\end{equation}
Then the optimization problem is given by:
\begin{align}
\tag{CP}\label{prob:CP}
\min_{\mathbf{x},\,\mathbf{\hat{x}},\,q} \quad & \Phi(\mathbf{x}, \hat{\mathbf{x}}) -\sum_{i=1}^{N} c_i \mathbf{1}^T\mathbf{\hat{x}}^{(i,3)}  \notag \\
\text{s.t.} \quad
& \mathbf{x}^{(i,t,c)} \geq 0,\ \mathbf{x}^{(i,t,nc)} \geq 0,\ \mathbf{\hat{x}} \geq 0,\ q^{(i,t,c)} \geq 0  \tag{\textit{Non-negativity}}\\
& A\mathbf{x}^{(i,t,c_+)} = \mathbf{y}^{(i,t,c_+)}, A\mathbf{x}^{(i,t,c_-)} = \mathbf{y}^{(i,t,c_-)},\ \forall i, t, c \tag{\textit{Flow conservation via charger $c$}}\\
& A\mathbf{x}^{(i,t,nc)} = \mathbf{y}^{(i,t,nc)},\ \forall i, t  \tag{\textit{Flow conservation without charging}}\\
& \sum_c q^{(i,t,c)} \leq q^{(i,t)},\ \forall i, t  \tag{\textit{Demand partitioning 1}}\\
& q^{(i,1,c)} = 0,\quad \sum_c q^{(i,2,c)} = q^{(i,2)},\ \forall i  \tag{\textit{Demand partitioning 2}}\\
& \hat{x}_c = \sum_{i=1}^{N} \sum_{t=1}^{3} q^{(i,t,c)},\ \forall c \tag{\textit{Total charger flow}} \\
& \mathbf{x} = \sum_{i=1}^N \sum_{t=1}^3 \left[ \mathbf{x}^{(i,t,nc)} + \sum_{c \in E_{\text{charger}}} \mathbf{x}^{(i,t,c)} \right] \tag{\textit{Total link flow}}
\end{align}
\noindent
The total delay experienced by all users is given by:
\begin{equation}
c_d(\mathbf{x}^*, \hat{\mathbf{x}}^*) = \sum_{l \in E} x_l^* \cdot d_l(x_l^*) + \sum_{c \in E_{\text{charger}}} \hat{x}_c^* \cdot d_c(\hat{x}_c^*)
\end{equation}

This represents the sum of flow times delay for each link, where $c_d(\mathbf{x}^*, \hat{\mathbf{x}}^*)$ denotes the total delay at the minimizer flows $(\mathbf{x}^*, \hat{\mathbf{x}}^*)$ of the above problem. The optimization objective in Problem~(CP) is equivalent to the objective function in equation~\eqref{eq:EPOPT}. Both formulations minimize the Beckmann potential minus the charging benefits for F$_3$ agents, ensuring that their minimizers correspond to Nash equilibria of the congestion game.

For easy reference, Table~\ref{tab:notation} summarizes the key notation used throughout this section.

\begin{table}[h]
\centering
\caption{Key Notation for Congestion Game Model}\label{tab:notation}
\begin{tabular}{ll}
\hline
\textbf{Symbol} & \textbf{Meaning} \\
\hline
$G=(V,E)$ & Road network: nodes and directed links \\
$V_c \subseteq V$ & Candidate charger locations \\
$E_{\text{charger}}$ & Self-loop links representing chargers \\
$N$ & Number of origin-destination (OD) pairs \\
$(i,t)$ & Agent type: $i \in \{1,\ldots,N\}$ (OD), $t \in \{F_1,F_2,F_3\}$ \\
$O_i, D_i$ & Origin and destination for OD pair $i$ \\
$S_{(i,t)}$ & Set of feasible routes from $O_i$ to $D_i$ \\
$q^{(i,t)}$ & Demand (flow) of agent type $(i,t)$ \\
$x^{(i,t)}_l$ & Flow on link $l$ from agents of type $(i,t)$ \\
$x^{(i,t)}_{s_{(i,t)}}$ & Flow on route $s_{(i,t)} \in S_{(i,t)}$ for agent type $(i,t)$ \\
$x_l$ & Total flow on link $l$: $\sum_{i,t} x^{(i,t)}_l$ \\
$d_l(x_l)$ & Delay (travel time) function for link $l$ \\
$c_i$ & Charging benefit for OD pair $i$, type $F_3$ agents \\
$u^{(i,t)}(s_{(i,t)})$ & Cost for agent type $(i,t)$ using route $s_{(i,t)}$ \\
\hline
\end{tabular}
\end{table}

\pagebreak
\section{Queue-Based Model}
\label{sec:queue-based-model}
\noindent
Congestion games provide a tractable mathematical framework for analyzing traffic. However, these models are usually not realistic representations. For the sake of tractability, they typically omit important physical and operational constraints, such as spillback effects and vehicle dynamics at intersections. They also do not account for many practical details of real road networks, such as the exact number of lanes, speed limits, or geometric features that influence congestion and travel times. To complement the theoretical analysis and account for real-world traffic dynamics, we also test our results using a mesoscopic queuing-based traffic simulator as a second model.

We implement this queuing-based simulation tool by extending a software package from the UC Berkeley PATH Program \cite{Sim_Package_Git}. The original version of the tool was designed to analyze traffic during wildfire evacuations and has been applied successfully in several studies \cite{Sim_Package_Report,zhao2019agent,zhao2021developing}. For our work, we extended it to include EV models and charging stations.

The tool simulates traffic at the level of individual vehicles, capturing important dynamics such as spillback and vehicle movements at intersections. Although it does not model the fine-grained microscopic details of simulators like SUMO \cite{guastella2025trafficmodelingsumotutorial}, it provides important practical advantages: it runs efficiently, requires less detailed data, and allows EV charging behavior to be incorporated easily, making it well-suited for our study. It also supports the use of OpenStreetMap data, allowing us to test our results on realistic roadway networks.

The simulation is organized around four components: a link model, a node model, a charging station model, and an agent model. At each timestep, a central controller updates the state of every vehicle by running these models in sequence. Each vehicle has a predefined origin, destination, and departure time. The link and node models represent the roads and intersections in the network, with their geospatial attributes taken from OpenStreetMap. The link model organizes vehicles in a spatial queue (Figure~\ref{fig:link}), while the node model simulates vehicle behavior at intersections.

\tikzset{treetop/.style = {decoration={random steps, segment length=0.4mm},decorate},trunk/.style = {decoration={random steps, segment length=2mm, amplitude=0.2mm},decorate}}

\tikzset{
   my redcar/.pic={
        \shade[top color=red, bottom color=white, shading angle={135}]
        [draw=black,fill=red!20,rounded corners=1.2ex,very thick] (0,0) -- ++(0.2,1.3) -- ++(4,0) --  ++(0.7,-1.3) -- ++(-4.9,0) -- cycle;
        \draw[very thick, rounded corners=0.5ex,fill=black!20!blue!20!white,thick]  (0.7,1.3) -- ++(0.25,1.2) -- ++(2.5,0) -- ++(0.25,-1.2) -- cycle;
        \draw[thick]  (2.2,1.3) -- (2.2,2.5);
        \draw[draw=black,fill=gray!50,thick] (1,0) circle (0.5);
        \draw[draw=black,fill=gray!50,thick] (3.6,0) circle (0.5);
    }
}
\tikzset{
   my greencar/.pic={
        \shade[top color=green, bottom color=white, shading angle={135}]
        [draw=black,fill=red!20,rounded corners=1.2ex,very thick] (0,0) -- ++(0.2,1.3) -- ++(4,0) --  ++(0.7,-1.3) -- ++(-4.9,0) -- cycle;
        \draw[very thick, rounded corners=0.5ex,fill=black!20!blue!20!white,thick]  (0.7,1.3) -- ++(0.25,1.2) -- ++(2.5,0) -- ++(0.25,-1.2) -- cycle;
        \draw[thick]  (2.2,1.3) -- (2.2,2.5);
        \draw[draw=black,fill=gray!50,thick] (1,0) circle (0.5);
        \draw[draw=black,fill=gray!50,thick] (3.6,0) circle (0.5);
    }
}

\begin{figure}[h]
\centering
\begin{tikzpicture}[every pic/.style={scale=0.3}]
\shade[gray!50] (0,0) rectangle (8.5,-0.6);
\pic at (1,0) {my greencar};
\pic at (5.3,0) {my redcar};
\pic at (7,0) {my redcar};

\draw (0,-0.8) to (0,-1.2);
\draw (0,-1) to (5.25,-1);
\draw (5.25,-0.8) to (5.25,-1.2);

\draw (5.3,-0.8) to (5.3,-1.2);
\draw (5.3,-1) to (8.5,-1);
\draw (8.5,-0.8) to (8.5,-1.2);

\node[align=center] at (2.675,-1.7) {\small Running part \\\small (vehicle spends the free flow \\\small time before joining the exit queue)};
\node at (6.9,-1.25) {\small Exit queue};
\end{tikzpicture}
\caption{Spatial-queuing structure of a link.}
\label{fig:link}
\end{figure}
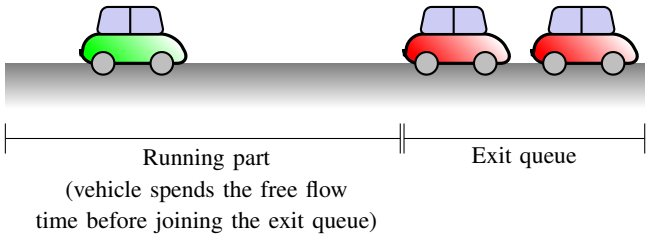

The following is a brief description of how the simulator routes the traffic through the network:
\begin{itemize}
\item Once a vehicle enters a link, it must spend the link's \textit{free flow travel time} before joining the link's exit queue.
\item  If the physical length of the exit queue reaches the upstream end of the link, then no more vehicles can enter the link (causing \textit{spillback}).
\item We assume that the links have a \textit{flow capacity} of 1900 vehicles per hour per lane. The number of vehicles that can enter/exit a link at a given time step is probabilistic and depends on its inflow/outflow capacity. 
\item These capacities are updated with the removal or addition of a vehicle, ensuring that the vehicle at the front of the exit queue has a higher probability of moving to its next link.
\item The node model determines the movement of the vehicles at the intersections. 
\item At each time step, the node model moves vehicles from the front of each link's exit queue to their next links, provided that:
\begin{itemize}
\item Doing so doesn't exceed the length capacity and the inflow capacity of the next link, nor the outflow capacity of the current link.
\item Vehicles do not have conflicting turn-directions with respect to a primary moving vehicle (e.g., vehicles with perpendicular directions cannot move).
\end{itemize}
\item All intersections are non-signalized, and vehicles entering an intersection have equal priority, except in roundabouts, where they have higher priority.
\end{itemize}

Electric vehicles requiring a charge visit charging stations before completing their trips. We model each charging station as a 3-stage process, consisting of an entrance queue, charging ports, and an exit queue. When a vehicle enters a charging station, it joins the end of the entrance queue. At each timestep, if there is an available charging port, the vehicle at the front of the entrance queue moves to that port, and the queue is adjusted accordingly. The charge levels of the vehicles at the ports increase by a certain amount at each timestep. Once a vehicle reaches its target charge, it proceeds to the end of the exit queue, if there is space. Vehicles then return to the traffic flow on a first-in, first-out basis. The entrance and exit of the charging station follow the node model, adhering to standard traffic rules. Figure~\ref{fig:charging_station} illustrates the charging station operations that we just described. 

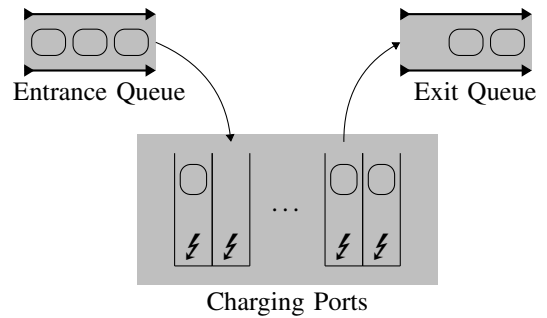
\begin{figure}[h]
\centering
\begin{tikzpicture}[scale=0.5]
    \fill [gray!50] (0,0.5) rectangle (8,4.5);

    \draw (1,1) to (1,4);
    \draw (1,1) to (2,1);
    \draw (2,1) to (2,4);
    \draw[rounded corners] (1.15,2.9) rectangle (1.85,3.7) {};
    \node at (1.5,1.5) {\Large\Lightning};
    \draw (2,1) to (2,4);
    \draw (2,1) to (3,1);
    \draw (3,1) to (3,4);
    \node at (2.5,1.5) {\Large\Lightning};
    \node at (4,2.5) {\ldots};
    \draw (5,1) to (5,4);
    \draw (5,1) to (6,1);
    \draw (6,1) to (6,4);
    \draw[rounded corners] (5.15,2.9) rectangle (5.85,3.7) {};
    \node at (5.5,1.5) {\Large\Lightning};
    \draw (6,1) to (6,4);
    \draw (6,1) to (7,1);
    \draw (7,1) to (7,4);
    \draw[rounded corners] (6.15,2.9) rectangle (6.85,3.7) {};
    \node at (6.5,1.5) {\Large\Lightning};
    \node at (4,0) {Charging Ports};

    \fill[gray!50] (-3,6.2) rectangle (0.5,7.7);
    \draw[thick,>->] (-3,6.2) -- (0.5,6.2);
    \draw[thick,>->] (-3,7.7) -- (0.5,7.7);
    \draw[rounded corners] (-1.9, 6.6) rectangle (-2.8, 7.3) {};
    \draw[rounded corners] (-0.8, 6.6) rectangle (-1.7, 7.3) {};
    \draw[rounded corners] (0.3, 6.6) rectangle (-0.6, 7.3) {};
    \node at (-1,5.6) {Entrance Queue};
    \draw[->] (0.5,6.95) to[bend left] (2.5,4.3);

    \fill[gray!50] (7,6.2) rectangle (10.5,7.7);
    \draw[thick,>->] (7,6.2) -- (10.5,6.2);
    \draw[thick,>->] (7,7.7) -- (10.5,7.7);
    \draw[rounded corners] (9.2, 6.6) rectangle (8.3, 7.3) {};
    \draw[rounded corners] (10.3, 6.6) rectangle (9.4, 7.3) {};
    \node at (9,5.6) {Exit Queue};
    \draw[->] (5.5,4.3) to[bend left] (7,6.95);

\end{tikzpicture}
\caption{Illustration of a charging station.}
\label{fig:charging_station}
\end{figure}

For more information on the simulator (excluding the charging behavior and stations) we refer to \cite{Sim_Package_Git,Sim_Package_Report,zhao2019agent}.

\subsection{Equilibrium in the Queuing-Based Model} \label{sec:queue-based-NE}
\noindent
As in the congestion game described in Section~\ref{sec:ev-as-congestion-game}, we consider a Nash equilibrium as the traffic operating point of the queuing-based model. Recall that the system is at a Nash equilibrium if no agent has an incentive to unilaterally deviate from its strategy. This means that, in the queuing-based model, any route assignment in which all utilized routes have minimal delay corresponds to a Nash equilibrium.

A key question, then, is how to compute a Nash equilibrium of the queuing-based model. Since the model is not guaranteed to be a potential game, the theoretical convergence guarantees for the “better-response” heuristics in \cite{monderer-shapley-potential-games,sandholm-potential-games,kara-et-al-erlang} do not apply. Moreover, existing work on dynamic equilibrium in queue-based models focuses on the non-atomic setting (e.g., \cite{Skutella2009flows}), whereas our focus is on computing equilibria for an atomic simulation. Nevertheless, we can still implement a better-response heuristic alongside an equilibrium check to likely steer the system toward a Nash equilibrium. In this approach, we iteratively move a driver from a route with the highest delay to a route with the lowest delay. After each iteration, we check whether all utilized routes have approximately minimal delays. Whenever this condition is satisfied, the system is close to a Nash equilibrium. In our upcoming experiments (see Section~\ref{sec:queue-experiment-results}), we compute an approximate equilibrium route assignment using a better-response heuristic -- this is an approximate equilibrium. Across all of our simulations, the resulting route assignments consistently satisfied the approximate Nash equilibrium condition within a finite number of iterations.

\section{Optimal EV charger placement} 
\noindent
Let $S\subseteq V_c$ be a selection of nodes for placing the EV stations and $x^*(S)$ be the Nash equilibrium flow for that selection. Let $c_d(x)$ represent the cumulative delay given the link flows $x$. The optimal EV placement problem can be defined as follows.
\begin{align}
\label{eq:Outeropt}
   \textrm{minimize } c_d(x^*(S)) 
   \\
   \textrm{subject to }  S \subseteq V_c, |S| = n_s      
\end{align}

In words, minimize the total delay experienced by the users over all of possible selections of $n_s$ EV charging station locations from the candidate set. 

Optimizing a set function becomes especially challenging when, as in our case, the function is neither submodular nor monotonic. To address this, we propose a greedy approach, outlined in Algorithm \ref{alg:greedyEV}, which provides an approximate solution to the optimization problem.

Our approach consists of adding EV nodes incrementally, one at a time. At each step, we identify the location that yields the greatest improvement in the objective at the equilibrium point, and add it to the set of nodes. The resulting solution is then locally refined using single swaps: at each step, we pick an unselected candidate location and consider exchanging it with a currently selected one. We apply the swap that yields the largest reduction in congestion, if any. This procedure is repeated for a fixed number of iterations, or until no further improvement is observed.  
We should mention that this is the scalable approach, but for smaller networks, alternative methods can be explored.

\begin{algorithm}[htb]
\SetKw{Function}{Function}
\SetKwBlock{FnBody}{is}{end}
\caption{Greedy EV Station Placement}
\label{alg:greedyEV}
\KwIn{Set of candidate locations $V$, required number of EV stations $n_s$, road network $(V,E)$, 
agent types/strategy sets/utility functions.} 
\KwOut{Selection of EV stations $V_o \subseteq V$, $|V_o|=n_s$.}

$V_o \gets \emptyset $\;
\While{$|V_o| < n_{s}$} {
\For{$e \in V \setminus V_o$} { 
$x^*(V_o \cup \{e\}) \gets$ NE allocation at $V_o \cup \{e\}$ 
}
$e^* \gets $ minimizer of $c_d(x^*(V_o \cup \{e\}))$ for $e \in V \setminus V_o$\;
$V_o \gets V_o \cup \{e^*\}$
}
Optional: Single-swap refinement\;
\Return{$V_o$}\;
\end{algorithm}

\subsection{Greedy Selection does not guarantee optimality}
\noindent
As we show in the Experiments and Results section, the greedy placement is likely to be very close to the optimal placement. Unfortunately, this method cannot guarantee optimal placement in theory, as we can see in the following example.

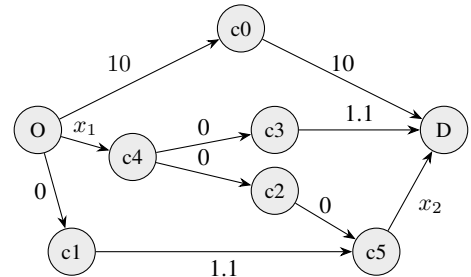
\begin{figure}[hbt]
    \centering
    \begin{tikzpicture}[scale=0.9, every node/.style={transform shape}]

\node[big] (O) at (-2,0) {O};
\node[node] (c0) at (1,1.5) {c0};
\node[node] (c4) at (-0.6,-0.4) {c4};
\node[node] (c3) at (1.5,0) {c3};
\node[node] (c2) at (1.5,-0.9) {c2};
\node[node] (c5) at (3,-1.8) {c5};
\node[node] (c1) at (-1.5,-1.8) {c1};
\node[big] (D) at (4,0) {D};

\draw[edge] (O) -- (c0) node[midway, above left] {$10$};
\draw[edge] (c0) -- (D) node[midway, above] {10};

\draw[edge] (O) -- (c4) node[midway, above] {$x_1$};
\draw[edge] (c4) -- (c3) node[midway, above] {0};
\draw[edge] (c3) -- (D) node[midway, above] {1.1};

\draw[edge] (c4) -- (c2) node[midway, above] {0};
\draw[edge] (c2) -- (c5) node[midway, above] {0};
\draw[edge] (c5) -- (D) node[midway, below right] {$x_2$};

\draw[edge] (O) -- (c1) node[midway, left] {0};
\draw[edge] (c1) -- (c5) node[midway, below] {1.1};

\end{tikzpicture}
    \caption{A Network with three possible EV charging spots given as $c_1,c_2,c_3$. The default charging station is $c_0$, and the problem is to pick two additional spaces to add to that one. The delays are noted on the corresponding links.}
    \label{fig:non-optimal}
\end{figure}

In figure \ref{fig:non-optimal}, we are given a single origin-destination pair, represented as $O, D$. Initially the only available charging station is $c_0$. We only have type $F_2$ drivers, and the total amount of flow between the origin and destination is $1$. The problem is to pick two places out of $\{c_1,c_2,c_3\}$.
The greedy EV station placement will pick $c_2$ first, then add one of $c_1$ or $c_3$ (these two have the same delay, so it will pick one at random), which leads to all flow following the path $O \rightarrow c_2 \rightarrow D$, hence $x_1=x_2=1$ and the resulting total delay will be $2$. 
The optimal solution, on the other hand, is the selection $c_1,c_3$ which leads to the flow equally divided between the paths $O \rightarrow c_3 \rightarrow D$ and $O \rightarrow c_1 \rightarrow D$, and the total delay will be $1.6$. 
\section{Methodology}
\label{sec:methodology}
\noindent

\begin{figure*}
    \centering
    \includegraphics[width=\textwidth]{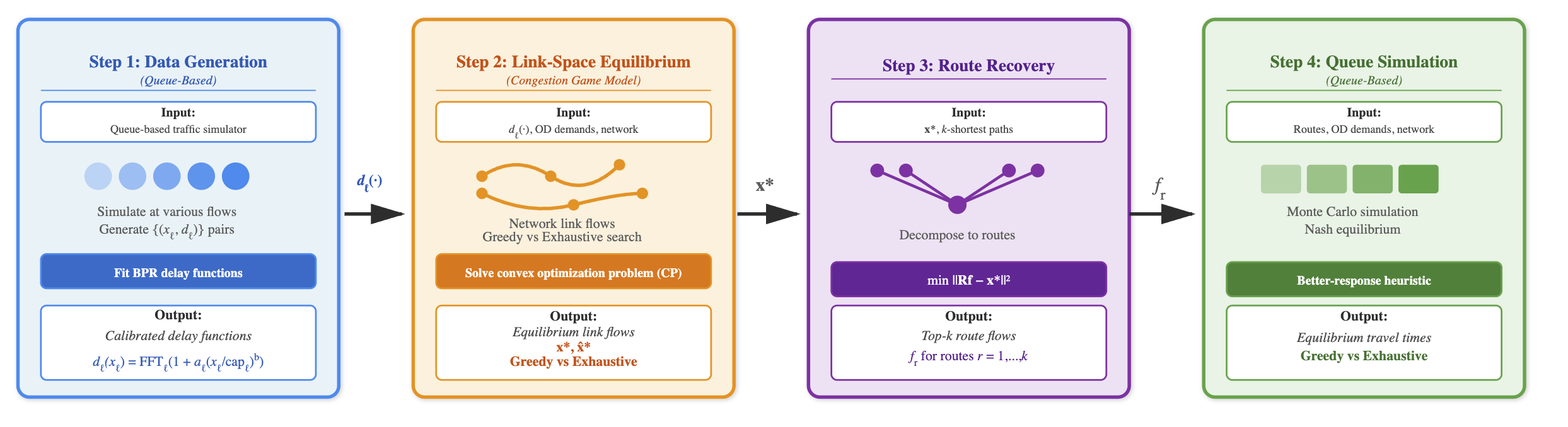}
    \caption{Four-step methodology for optimal EV charger placement. Queue-based simulation data (Step 1) calibrates BPR delay functions that feed into congestion game optimization (Step 2) to compute equilibrium link flows. These flows are decomposed into explicit routes via least-squares recovery (Step 3) and fed into queuing-based approximate equilibrium simulation (Step 4). The framework compares greedy and exhaustive search strategies in both models.}
    \label{fig:method}
\end{figure*}
\subsection{Multi-step optimization framework}
\noindent
To enable a fair, model-consistent comparison between the congestion-game formulation and the queue-based traffic simulator, we adopt a unified, multi-step workflow. The guiding objectives are:
\begin{enumerate}[label=(\roman*)]
    \item \textbf{Realistic link delays}: infer per-link delay parameters of the congestion game using data obtained from the queuing-based traffic simulator, yielding calibrated functions $d_\ell(\cdot)$ that reflect simulated traffic dynamics.
    \item \textbf{All-path coverage without route explosion}: avoid restricting to a small, possibly biased route set. Instead, we solve the inner problem(finding the NE) in link-space, which implicitly accounts for all possible routes $s$ in the route set, and only recover route flows afterward. This extends our prior work \cite{Sonmez2024Optimal} that optimized over a selected subset of paths.
    \item \textbf{Route-flow realization for simulation}: obtain a set of route flows that best explain the optimized link flows so they can be executed in the queue-based simulator, which requires explicit route assignments.
\end{enumerate}

Our method proceeds as follows.
\begin{enumerate}[label=\arabic*.]
    \item \textbf{Queue-based data generation and link parameter fitting.} Using the queue-based model, we simulate each physical link over a range of aggregate flows to obtain pairs $\{(x_\ell,\,\bar d_\ell)\}$, where $x_\ell$ is the measured flow and $\bar d_\ell$ is the average link travel time under stationary operation. For each link $\ell$, we estimate parameters $\hat\theta_\ell$ of the delay law in \eqref{fit_bpr} by least squares on the simulated dataset, yielding calibrated $d_\ell(\cdot;\hat\theta_\ell)$ for all links.
    \item \textbf{Equilibrium in link-space.} We solve the convex inner problem \eqref{eq:EPOPT} with delay functions $d_\ell(\cdot;\hat\theta_\ell)$ to obtain the equilibrium link flows $\mathbf{x}^*$ and charger throughputs $\hat{\mathbf{x}}^*$ for a given charger set $S$. Because the decision variables are link flows subject to flow-conservation, every feasible path is implicitly represented.
    \item \textbf{Route-flow recovery.} From $\mathbf{x}^*$, we recover route flows by solving the constrained least-squares program described in Section~\ref{sec:route-recovery}. The routing matrix $\mathbf{R}$ is built from a large library of $k$-shortest paths (non-charging and charging compositions). We then rank routes by $f_r$ and study top-$k$ coverage to select a subset that reliably reproduces $\mathbf{x}^*$.
    \item \textbf{Queue-based simulation with recovered routes.} Using the recovered route library and the corresponding flows as initialization, we run the queue-based simulator to its stationary regime, obtaining the simulator-side equilibrium and performance metrics. This enables fair comparison against the congestion-game equilibrium under the same calibrated link laws and OD demands.
\end{enumerate}

\subsection{Queue-based data generation and link parameter fitting}
\noindent
To ensure consistency between the congestion game model and the queue-based simulator, we first generate realistic traffic data using the queue-based model and then calibrate delay functions that represent the simulated traffic dynamics.

\subsubsection{Data generation process}
\noindent
Using the queue-based simulator, we simulate each physical link over a range of aggregate flows to obtain pairs $\{(x_\ell,\,\bar d_\ell)\}$, where $x_\ell$ is the measured flow and $\bar d_\ell$ is the average link travel time under stationary operation. This data generation process reflects realistic traffic dynamics that arbitrary delay functions may fail to represent.

\subsubsection{Link parameter fitting}
\noindent For each link $\ell$, we estimate parameters $\hat\theta_\ell$ of the delay by least squares on the simulated dataset. We fit the BPR (Bureau of Public Roads) model:
\begin{equation}\label{fit_bpr}
d_l(x_l) = FFT_l\left(1 + a_l \cdot \left(\frac{x_l}{cap_l}\right)^{b_l}\right)
\end{equation}
where $FFT_l$ is the free-flow travel time (travel time at zero congestion) for link $l$, $cap_l$ is the link capacity (maximum flow in vehicles per hour), and $a_l, b_l > 0$ are calibrated parameters that control the rate and curvature of delay increase with congestion. All four parameters $\{FFT_l, cap_l, a_l, b_l\}$ are estimated by nonlinear least-squares fitting to the empirical link delay data collected from the queue-based simulations.

\subsection{Equilibrium in link-space}
\noindent
In this step, we solve the congestion game optimization problem to find the Nash equilibrium flows directly in link-space. This approach avoids the computational explosion that would occur if we explicitly enumerated all possible routes, while still ensuring that every feasible path is implicitly considered.

The equilibrium solution is obtained by solving the convex optimization problem \eqref{eq:EPOPT} with the calibrated delay functions $d_\ell(\cdot;\hat\theta_\ell)$ from the previous step. This yields the equilibrium link flows $\mathbf{x}^*$ and charger throughputs $\hat{\mathbf{x}}^*$ for a given charger set $S$. 
Since the decision variables are link flows with flow-conservation constraints, all feasible paths are represented.

The key advantage of this link-space approach is computational tractability: rather than considering the exponentially large set of all possible routes, we work directly with link flows, which scale linearly with the number of links in the network. This enables us to solve large-scale problems that would be intractable using traditional route-based formulations.
\subsection{Route-flow recovery}\label{sec:route-recovery}
\noindent
In a standard network flow problem without additional constraints, any feasible flow can be exactly decomposed into path flows using the classical flow decomposition theorem. In our setting, due to computational constraints, we approximate this by considering only a subset of all possible paths as follows.

For each OD pair, we generate candidate routes using k-shortest path algorithms. For non-charging routes, $k_{od}$ shortest paths from origin to destination. For charging routes, combinations of $k_{oc}$ shortest paths from origin to charger and $k_{cd}$ shortest paths from charger to destination

The route recovery problem is formulated as a constrained least-squares optimization:
\begin{align}
\min_{\mathbf{f}} \quad & \|\mathbf{R}\mathbf{f} - \mathbf{x}^*\|_2^2 \notag \\
\text{s.t.} \quad
& \sum_{r \in \mathcal{R}_{i,t}} f_r = q^{(i,t)}, \quad \forall i, t \tag{\textit{OD demand constraints}}\\
& \sum_{r \in \mathcal{R}_{i,t,c}} f_r = q^{(i,t,c)}, \quad \forall i, t, c \tag{\textit{Charger flow constraints}}\\
& f_r \geq 0, \quad \forall r \tag{\textit{Non-negativity}}
\end{align}

where $\mathbf{R}$ is the routing matrix (1 if route $r$ uses link $l$, 0 otherwise), $\mathbf{f}$ is the vector of route flows, $\mathbf{x}^*$ are the optimal link flows from Phase 1, and $\mathcal{R}_{i,t}$ and $\mathcal{R}_{i,t,c}$ denote the sets of routes for OD pair $i$, type $t$, and charger $c$ respectively.

\subsection{Queue-based simulation with recovered routes}
\label{sec:queue-meth}
\noindent
The final step in our methodology is to evaluate the greedy placement algorithm using the recovered routes from Section~\ref{sec:route-recovery} and the queuing-based simulation tool.

\subsubsection{Nash Equilibrium Route Assignments} \label{sec:ne-route-assignments}
\noindent
We set the departure time of all agents to 0 and, similar to the congestion game framework, we take a Nash equilibrium as the operating point of traffic. Therefore, we need to compute an approximate equilibrium of the queue-based model to assign the routes.

To approximate this equilibrium, we use the better-response heuristic outlined in Section~\ref{sec:queue-based-NE}. In each iteration, we shift one driver from the route with the highest delay to the route with the lowest delay, run $N$ independent simulations under the updated assignments, and compute the Monte Carlo average of agent-wise mean travel times for each route (unused routes' travel times are set to their free flow times). This process is repeated until the link travel times approximately satisfy the Nash equilibrium condition--that is, the travel times on all used links are nearly identical and minimal across all feasible routes. We note that the simulator introduces randomness primarily to capture flow capacity constraints (see Section~\ref{sec:queue-based-model}). This motivates our use of Monte Carlo averaging to estimate expected travel times.

In realistic road networks, the number of routes connecting an origin–destination pair can be prohibitively large. To ensure tractability, we restrict attention to the top-$k$ routes obtained in Section~\ref{sec:route-recovery} (using the congestion game framework). We select $k$ by analyzing coverage and error metrics as functions of the number of routes (see Figure~\ref{fig:k_routes_analysis}). For example, Figure~\ref{fig:k_routes_analysis} shows that $k=16$ achieves near-complete coverage while keeping the route set manageable. For the initial assignment, we use the Nash equilibrium flows derived in Section~\ref{sec:route-recovery}.

\subsubsection{Evaluating greedy placement}\label{sec:eval-greedy-methodology}
\noindent
We evaluate the performance of the greedy charger placement algorithm by comparing it to the results of exhaustive search. For each approach, given a set of EV charger locations, we compute the Monte Carlo average of the total travel time experienced by all agents across $M$ independent simulations. In each simulation, routes are assigned according to the Nash equilibrium derived using the methodology described above.

Exhaustive search evaluates all possible charger placements, whereas the greedy algorithm considers only the subset of placements determined by Algorithm~\ref{alg:greedyEV}. Because exhaustive search explores the full solution space, it identifies the placement that minimizes total travel time (and, thus, congestion). We assess the greedy algorithm by determining whether it can find the optimal placement identified by exhaustive search.

\section{Experiments and Results}
\subsection{Experimental setup}
\noindent
We evaluate the overall workflow on a real-world roadway network in College Park, MD, USA, through simulations involving 180 vehicles, of which 120 require charging. The network, shown in Figure~\ref{fig:network}, was extracted from OpenStreetMap. Our experimental pipeline uses calibrated link-delay models (Section~\ref{sec:methodology}), solves the equilibrium in link-space, recovers a compact set of routes, and simulates those routes in the queue-based model.

\begin{figure}[htb!]
    \centering
    \includegraphics[width=0.48\linewidth]{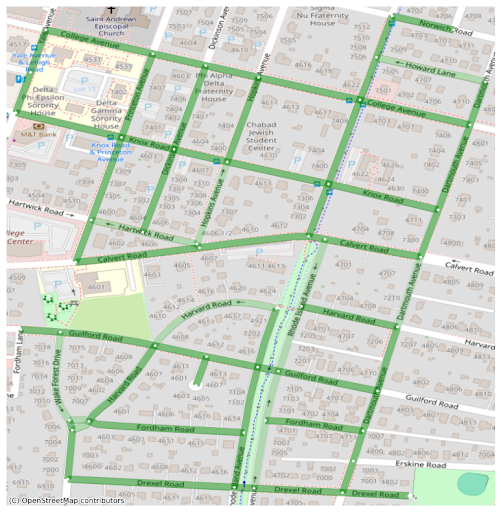}
    \includegraphics[width=0.48\linewidth]{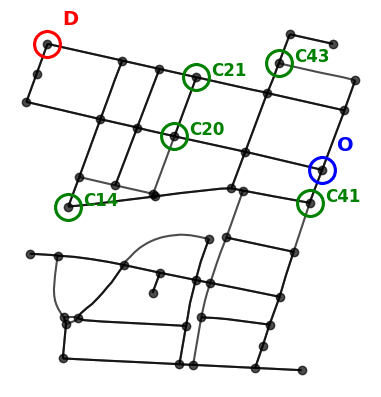}
    \caption{Study network extracted from OpenStreetMap (College Park, MD, USA). Left: Geographic overview of the network. Right: Schematic representation with candidate charging locations (green circles), origin node `O', and destination node `D'. The network has 48 nodes and 123 directed links.}
    \label{fig:network}
\end{figure}

\subsection{Link-delay calibration results}
\noindent
Using the queue-based simulator, we generated link-level datasets and fitted BPR-style delay functions. Figure~\ref{fig:link-fits} shows representative fits: the top row exhibits high $R^2$ (high agreement), while the bottom row illustrates cases with lower $R^2$ where the simple BPR law underfits complex dynamics.

\begin{figure}[htb]
    \centering
    \includegraphics[width=0.4925\linewidth]{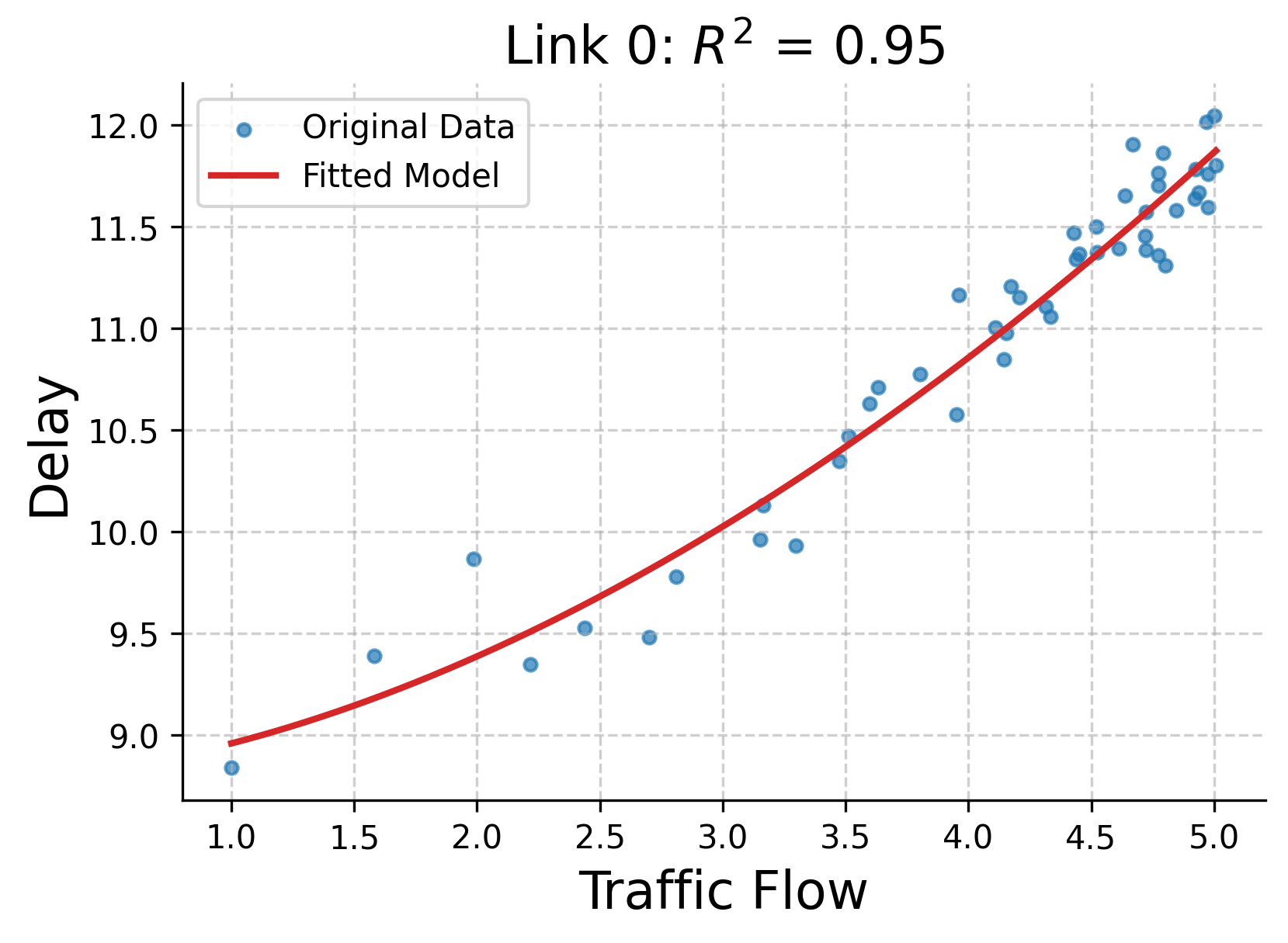}
    \includegraphics[width=0.4925\linewidth]{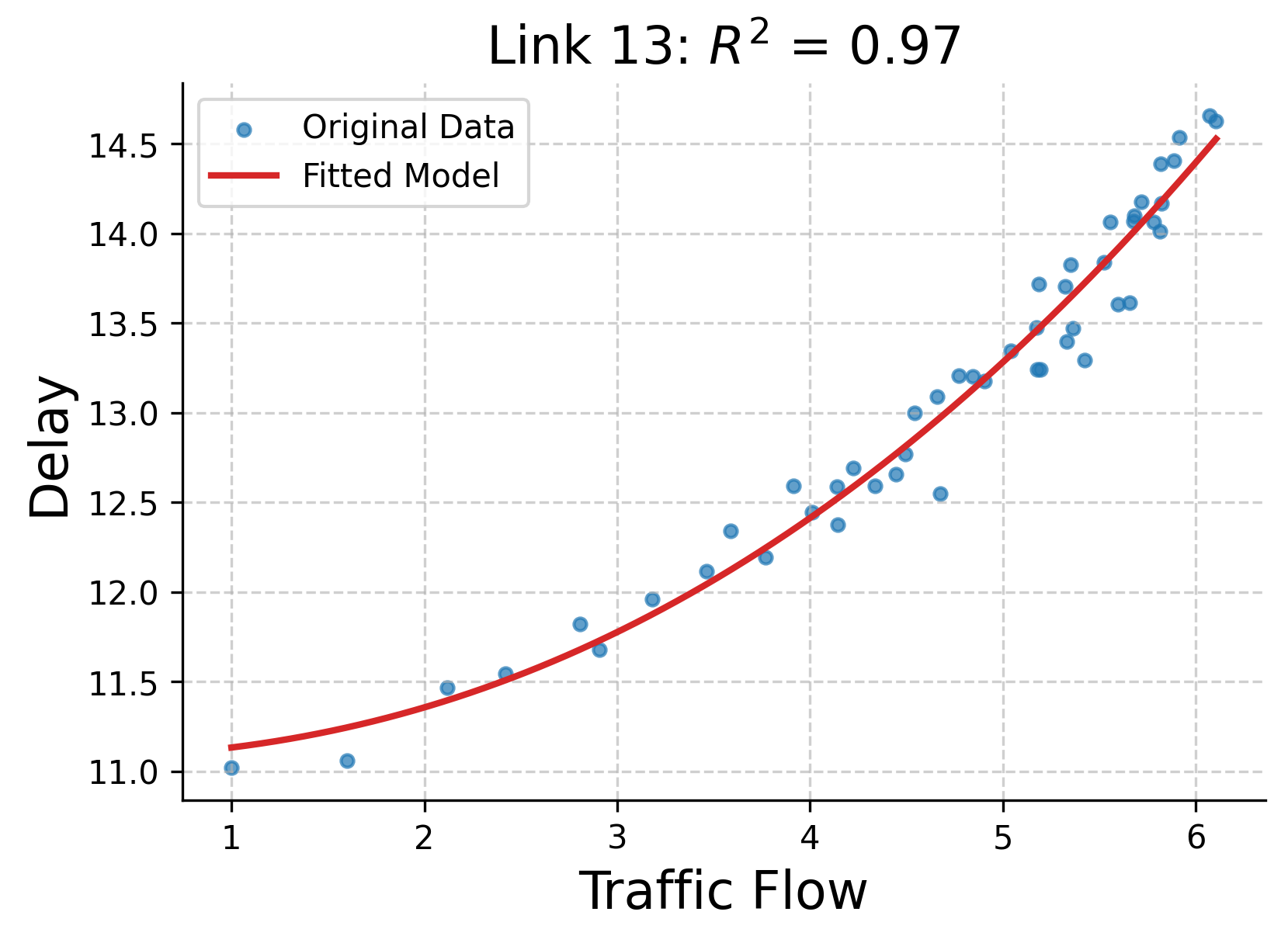}
    \includegraphics[width=0.4925\linewidth]{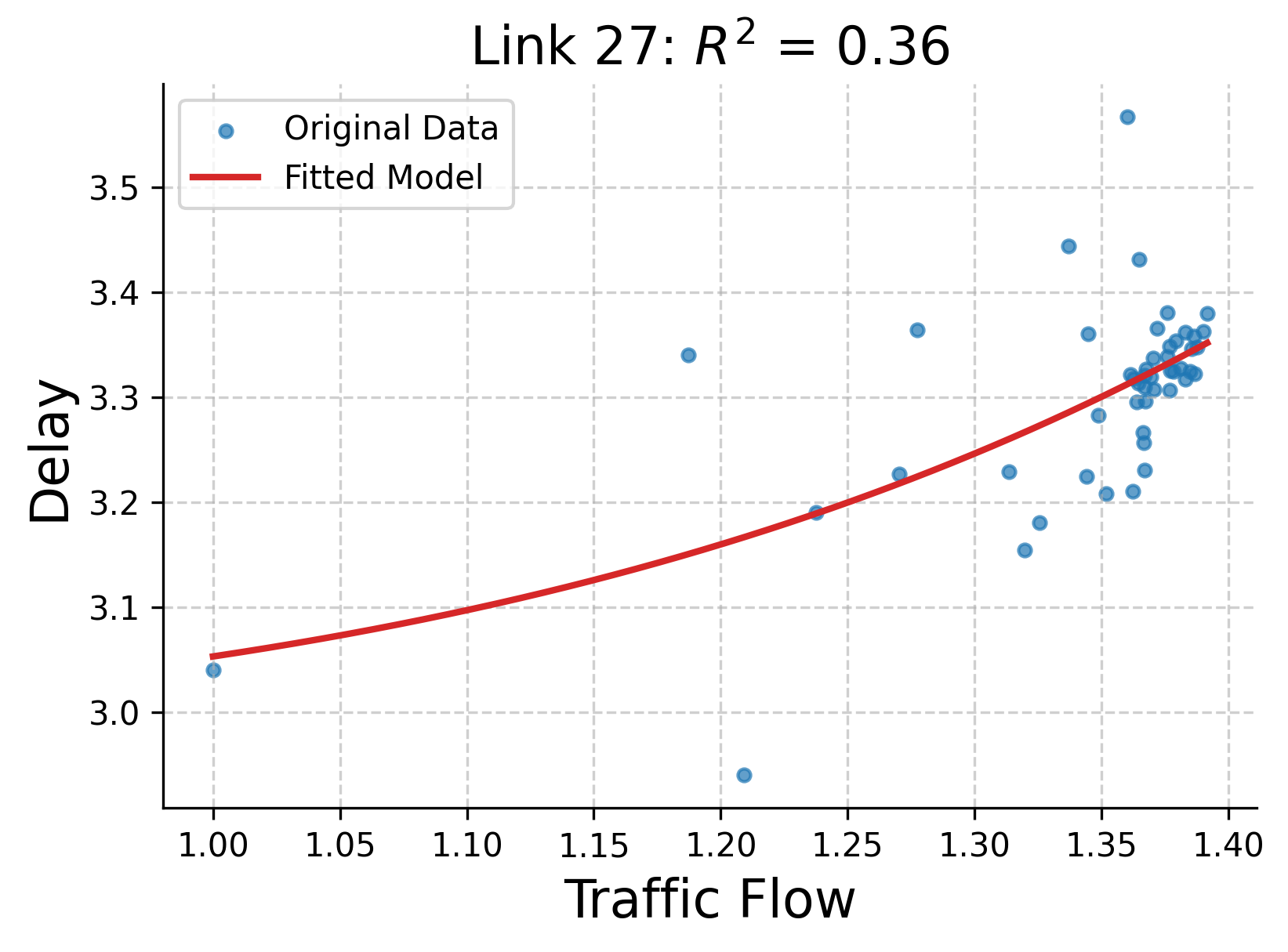}
    \includegraphics[width=0.4925\linewidth]{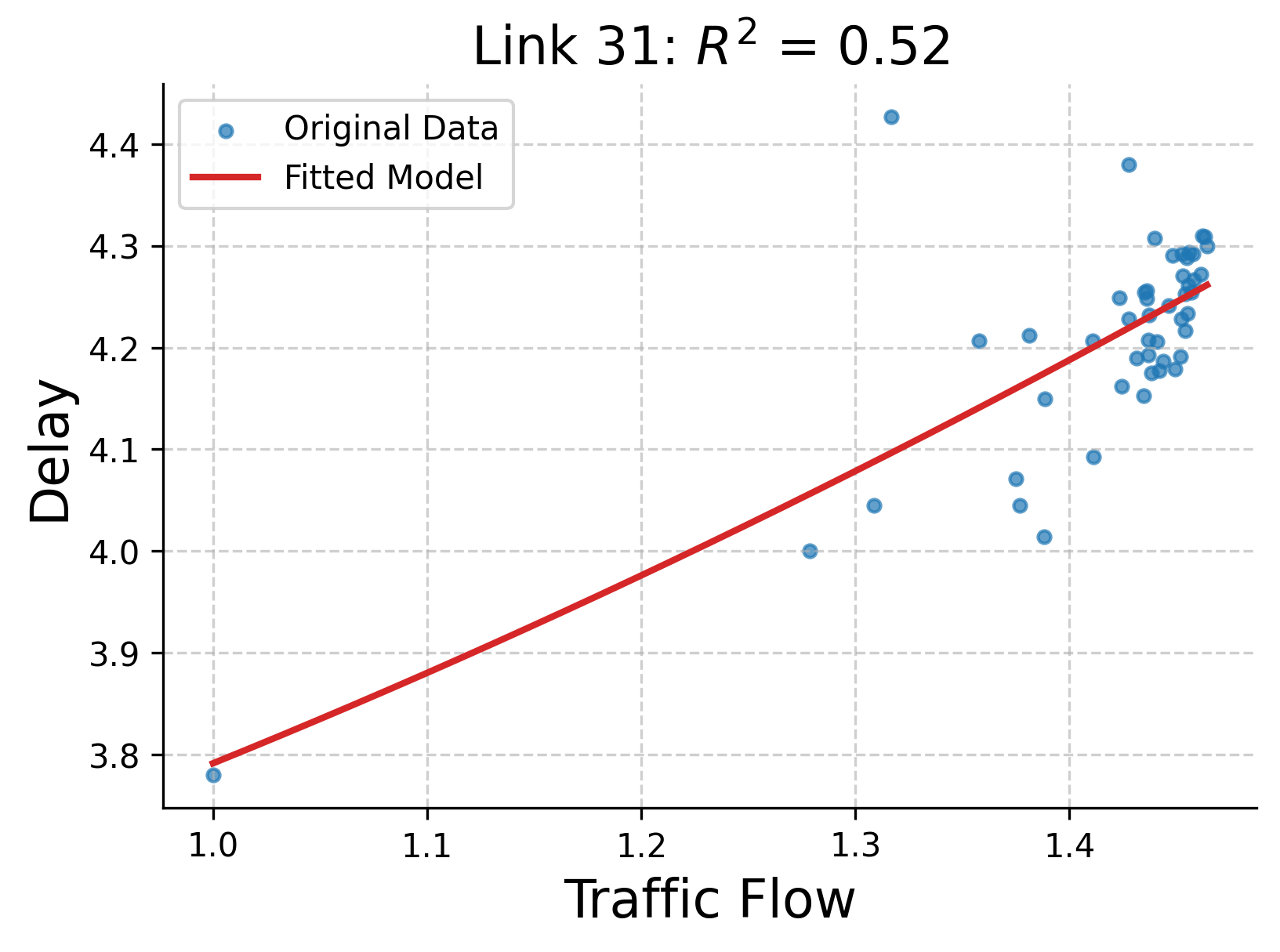}
    \caption{Examples from link-parameter fits. Top: high $R^2$; Bottom: lower $R^2$.}
    \label{fig:link-fits}
\end{figure}

\subsection{Route-flow recovery evaluation}\label{subsec:route-flow-recovery}
\noindent
We assess how well a compact set of routes can reproduce the optimal link flows from the equilibrium solver. For a given number of routes, we: (i) sort recovered routes by flow magnitude, (ii) select the top-$k$ from sorted route flows and reconstruct link flows, and (iii) compute coverage, MAE, RMSE, and correlation versus the original link flows.

\begin{figure}[h!]
    \centering
    \includegraphics[width=0.5\textwidth]{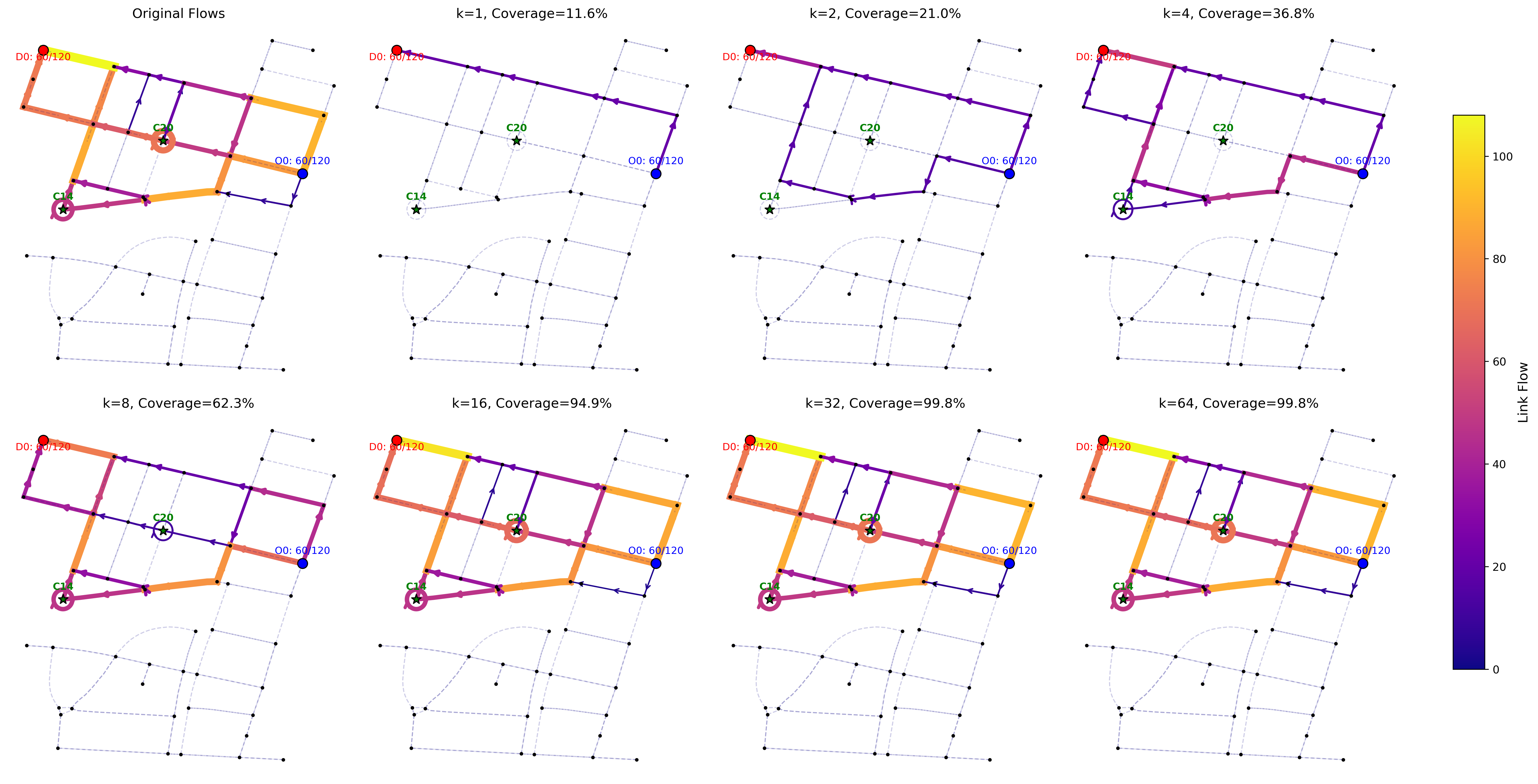}
    \caption{Progressive flow reconstruction as the number of routes increases.}
    \label{fig:flow_reconstructions}
\end{figure}

\begin{figure}[htb]
    \centering
    \includegraphics[width=0.46\textwidth]{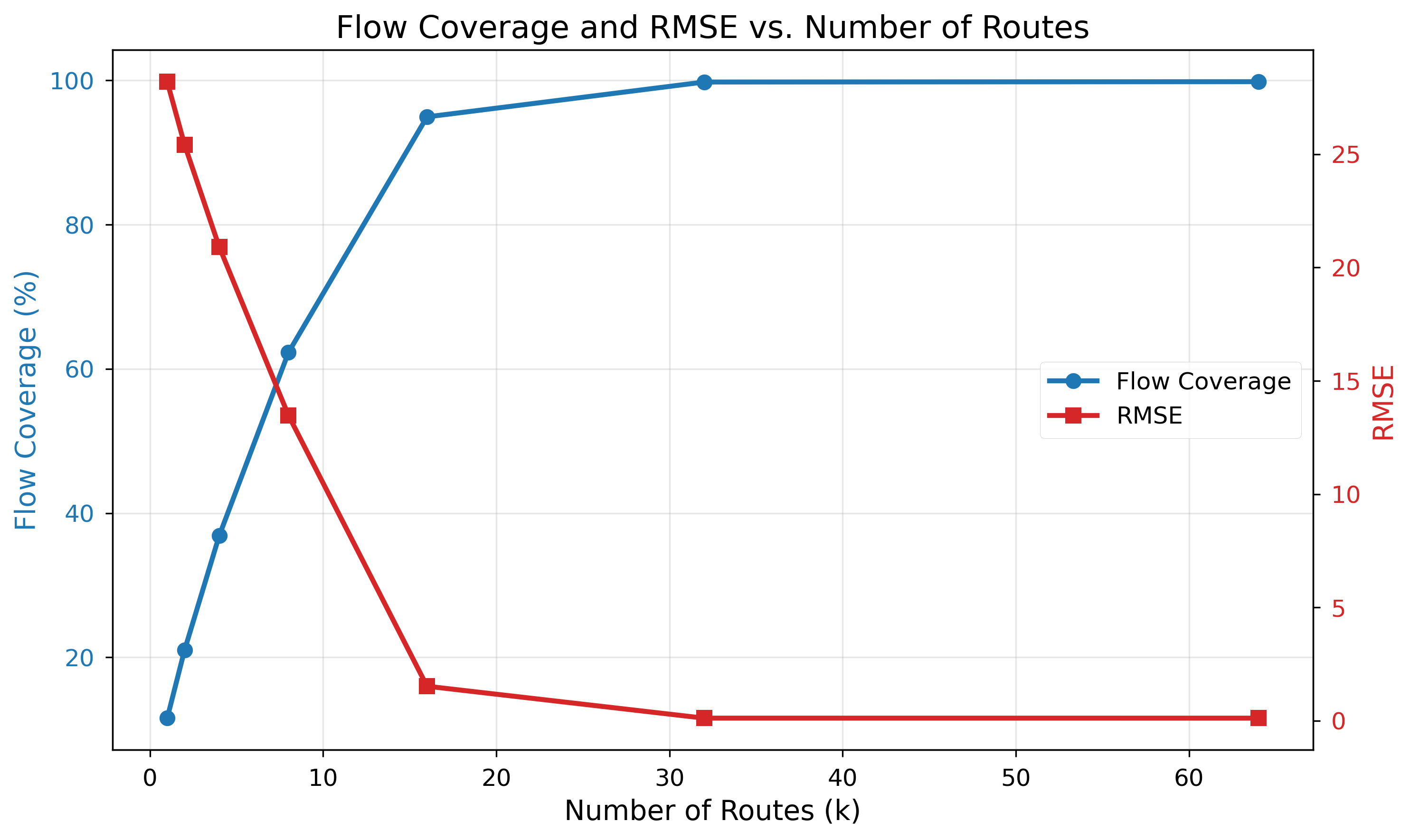}
    \caption{Coverage and RMSE metrics by the number of routes.}
    \label{fig:k_routes_analysis}
\end{figure}

The results reveal a clear trade-off between route set size and accuracy, as can be seen in the Figure \ref{fig:flow_reconstructions}. Increasing $k$ improves coverage and reduces error across all metrics, with diminishing returns near $k=16$--$32$. With 8 routes, we achieve 62.3\% coverage (RMSE 13.5); with 16 routes, coverage rises to 94.9\% (RMSE 1.52); with 32 routes, we reach 99.8\% coverage with negligible error (RMSE 0.12). Thus, a small set of high-flow routes explains most network flow while remaining tractable for simulation.

\subsection{Queue-based simulation with recovered routes} \label{sec:queue-experiment-results}
\noindent
In the final stage of our experiments, we evaluate the greedy charger placement algorithm on the queuing-based model, using the methodology in Section~\ref{sec:queue-meth}.

We initialize the queuing-based simulator with the top-$16$ recovered routes, where we based the choice $k=16$ on the coverage metrics obtained in Section~\ref{subsec:route-flow-recovery} (displayed in Figure~\ref{fig:k_routes_analysis}). This choice ensures substantial coverage ($94.9\%$), while increasing $k$ to 32 yields negligible improvement.

Now, for every possible combination of charger locations, we compute a Nash equilibrium route assignment. In all simulations, the better-response heuristic converged near a Nash equilibrium, where we used the following inequality as the convergence condition:
\begin{align*}
&\max_{s,\tilde{s}\in R}\left|\sum_{l\in s}d_l(x_l) - \sum_{k\in \tilde{s}}d_k(x_k)\right| \leq \alpha \min_{s \in S} \left( \sum_{l\in s}d_l(x_l) \right).
\end{align*}
Here, $\alpha=0.01$ is a constant, $S$ is the set of all available routes ($|S|=16$), $R$ is the subset of routes used by at least one agent, and $d_l(x_l)$ is the agent-wise average travel time of link $l$ under the link flow $x_l$. Thus, this inequality means that a route assignment is near a Nash equilibrium if the maximum average travel time discrepancy among used routes is less than 1\% of the minimum possible travel time. Figure~\ref{fig:nash_flows} displays the heatmap of the resulting Nash equilibrium flow for charging station locations C20 and C43. For comparison, we place it side-by-side with the corresponding Nash equilibrium flows from the congestion game model. This comparison illustrates that both models use essentially the same set of links, and the least-used links coincide across models. However, several links that carry only moderate flow in the congestion game are used much more heavily in the queuing-based model. Although we calibrated the congestion game to align closely with the queuing-based model, differences remain because the BPR function cannot capture certain behaviors inherent to the queuing-based formulation.

\begin{figure}[h]
\centering
{\fontfamily{DejaVuSans-TLF}\selectfont
\begin{minipage}{0.48\linewidth}
    \centering
    {\tiny{Congestion Game}}\\[3pt]
    \includegraphics[width=\linewidth]{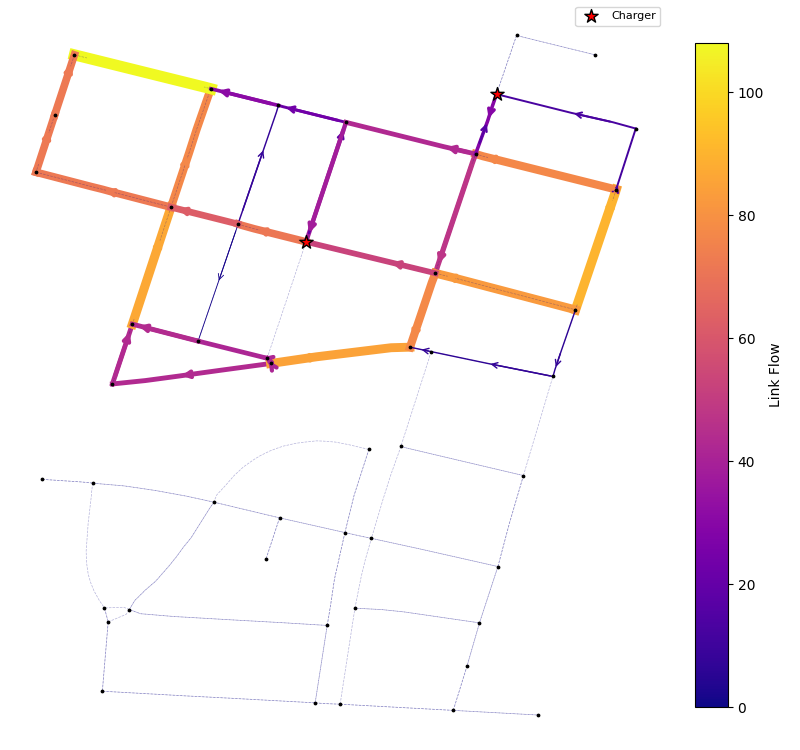}
\end{minipage}
\hfill
\begin{minipage}{0.48\linewidth}
    \centering
    {\tiny{Queuing-Based Model}}\\[3pt]
    \includegraphics[width=\linewidth]{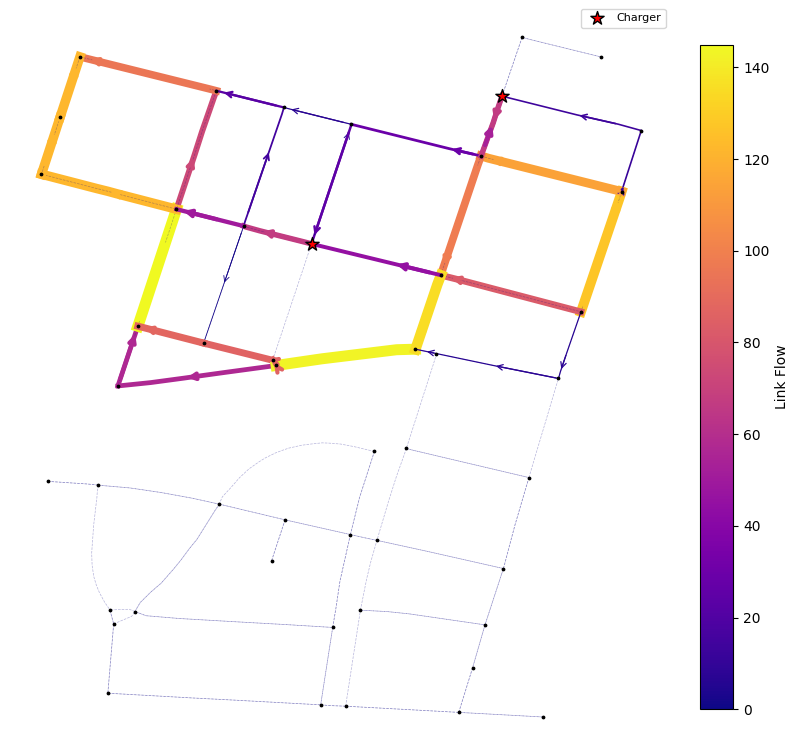}
\end{minipage}
}
\caption{Heatmaps of the equilibrium flows produced by the congestion game (Nash equilibrium) and the queuing-based model (approximate equilibrium) for charging station locations C20 and C43.}
\label{fig:nash_flows}
\end{figure}

We now evaluate the performance of the greedy placement method. For each set of charger locations, we assume that traffic follows the approximate Nash equilibria computed above and apply the analysis in Section~\ref{sec:eval-greedy-methodology}. Within each simulation setup (i.e., each charger location combination), we estimate the expected total travel time by running 100 independent Monte Carlo simulations and averaging their results. Figure~\ref{fig:greedy_vs_exhaustive} presents the outcomes of the exhaustive search across all charger combinations, along with the route with the smallest delay identified by the greedy placement algorithm. As shown in the figure, \textit{both exhaustive search and greedy placement label the same combination of charger locations as optimal}, confirming the validity of the greedy placement approach. Notably, the greedy algorithm identifies node C21 as the best initial charger location (which is consistent with the results from exhaustive search), and then evaluates additional locations in combination only with node C21.

Figure~\ref{fig:greedy_vs_exhaustive} also displays the normalized travel times from the congestion game model. For this model, the output of exhaustive search and greedy placement coincide. However, the optimal location-tuple that the congestion game yields is different than the one from the queuing-based model. In fact, we see a notable difference between the ranking of sets of charger locations. This difference is not surprising, as the congestion game model is significantly different than the queuing-based model. Nonetheless, the performance of the greedy placement algorithm withstands this difference. 

\textit{Code and Data is available upon request.}

\begin{figure}
    \centering
    \includegraphics[width=\linewidth]{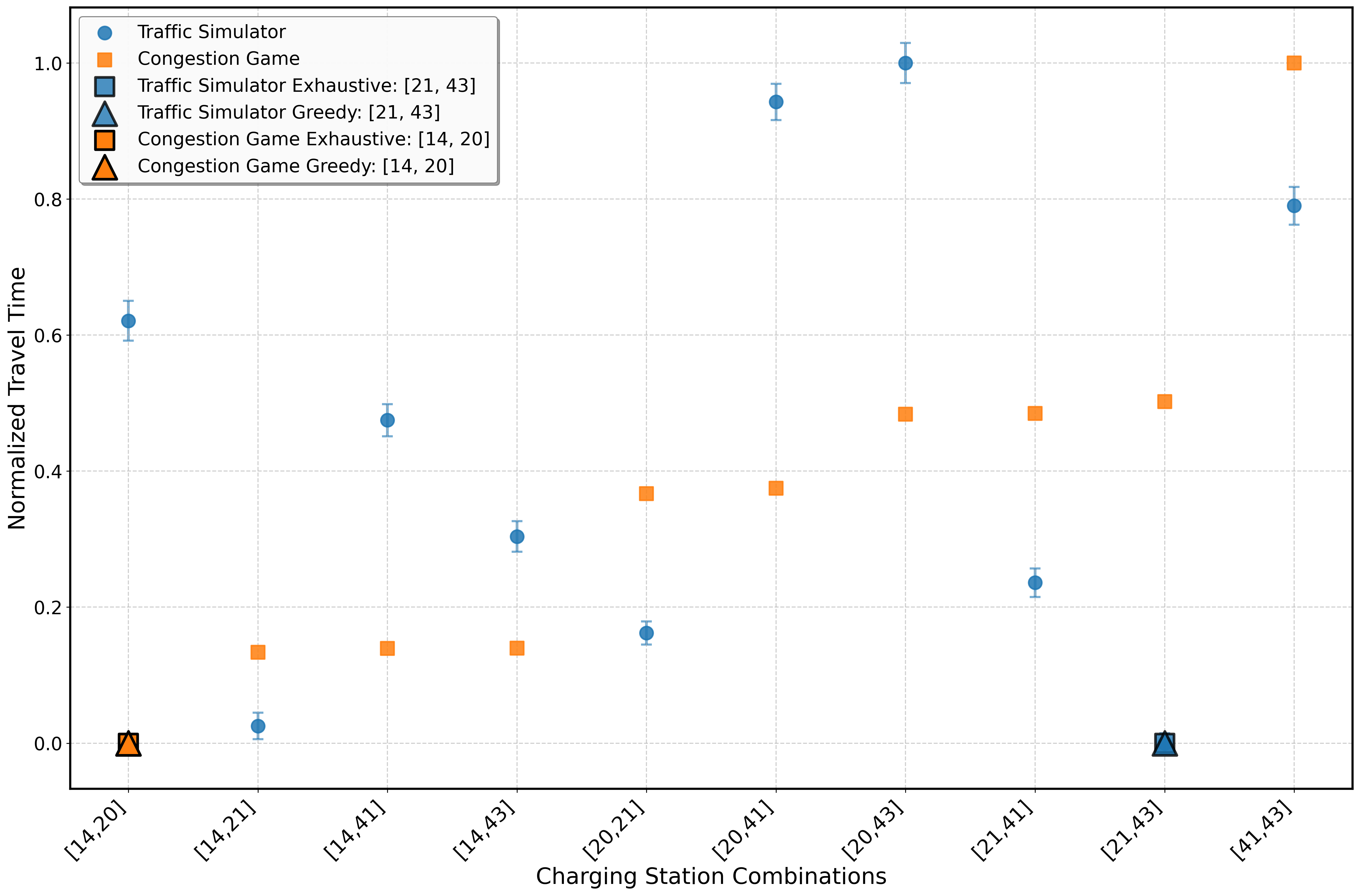}
    \caption{Queue-based and Congestion game model results showing normalized total travel time for each charger location combination. Bars indicate mean values from 100 Monte Carlo simulations; error bars show standard error. While optimal locations differ between models, greedy placement remains effective in both formulations.}   
    \label{fig:greedy_vs_exhaustive}
\end{figure}

\section{Conclusion} 
\noindent
In this paper, we formulated the EV charging station placement problem using two models: (i) a classical congestion game framework, and (ii) a queue-based model that is more realistic but more difficult to analyze.
We found that the congestion game model can yield equilibrium outcomes that differ substantially from those of the queue-based model. Although we also provide a counterexample showing that greedy selection does not guarantee optimality, our experiments indicate that it nevertheless produces near-optimal solutions in practice. Finally, both formulations are scalable: the congestion game model is directly scalable by construction, and for the queue-based model, we observe that despite the potentially large number of OD routes, only a small fraction of feasible routes is actually used in equilibrium, which keeps the computational need manageable.

\bibliographystyle{IEEEtran}
\bibliography{sample}

\end{document}